\newtheorem{thm}{Theorem}[subsection]
\newtheorem{proposition}[thm]{Proposition}
\newtheorem{remark}[thm]{Remark}
\newtheorem{theorem}[thm]{Theorem}
\DeclareMathOperator{\tr}{tr}
\DeclareMathOperator{\str}{str}
\DeclareMathOperator{\sdet}{sdet}
\begin{document}

\begin{titlepage}

\begin{center}

{\Large \bf Integrable extensions of Adler's map via Grassmann algebras}

\vskip 1.5cm

{{\bf P. Adamopoulou$^{\dagger}$, S. Konstantinou-Rizos$^{\ddag}$ and G. Papamikos$^{\star}$ }} 

\vskip 0.8cm

{\footnotesize
$^{\dagger}$School of Mathematical \& Computer Sciences, Heriot-Watt University, UK}
\\
{\footnotesize
$^{\ddag}$ Centre of Integrable Systems, P.G. Demidov Yaroslavl State University, Russia}
\\
{\footnotesize
$^{\star}$Department of Mathematical Sciences, University of Essex, UK}

\vskip 0.5cm

{\footnotesize {\tt E-mail: p.adamopoulou@hw.ac.uk, skonstantin84@gmail.com, g.papamikos@essex.ac.uk }}\\

\end{center}

\vskip 2.0cm

\begin{abstract}
\noindent We study certain extensions of the Adler map on Grassmann algebras $\Gamma(n)$ of order $n$. We consider a known Grassmann-extended Adler map, and assuming that $n=1$ we obtain a commutative extension of Adler's map in six dimensions. We show that the map satisfies the Yang--Baxter equation, admits three invariants and is Liouville integrable. We solve the map explicitly, viewed as a discrete dynamical system.
\end{abstract}

\hspace{.2cm} \textbf{PACS numbers:} 02.30.Ik 

\hspace{.2cm} \textbf{Mathematics Subject Classification:} 15A75, 35Q53, 16T25, 17B80, 37J70

\hspace{.2cm} \textbf{Keywords:} Yang--Baxter maps, Grassmann algebras, Liouville integrability, 

\hspace{.2cm} solutions of discrete dynamical systems, symplectic structures

\vfill

\end{titlepage}

\section{Introduction}
Since the late 80s, when the study of  set-theoretical solutions to the Yang--Baxter equation started \cite{Buchstaber, skl88} (formally proposed by Drinfeld in 1990 \cite{Drinfeld}), a lot of important results have been obtained on the relation of the Yang--Baxter equation with the field of integrable systems and various algebraic structures. For example, the relation of the Yang--Baxter equation to matrix refactorisation problems \cite{KP, KP-2009, SV}, the classification of quadrirational Yang--Baxter maps \cite{ABS-2005, PTV}, the relation of Yang--Baxter maps with Darboux transformations \cite{Sokor-Sasha, KRP, MPW}, with reflection maps in soliton theory \cite{Caudrelier, Caudrelier2}, and more recently with the theory of braces \cite{DS, Rump}.

The increasing popularity of noncommutative structures over the past few decades and their applications in the field of mathematical physics and of integrable systems motivated the noncommutative (Grassmann) extension of Darboux and B\"acklund transformations \cite{GM, XLL, XL, XL-2}. This, in turn, gave rise to the construction of Grassmann extended Yang--Baxter maps. In particular, the first examples of  such maps appeared in \cite{GKRM} in relation to the nonlinear Schr\"odinger (NLS) equation and the derivative NLS equation, in other works \cite{KRK, KRM} in relation to the KdV equation, as well as for the Boussinesq equation in \cite{KR}, where a Grassmann-extended Yang--Baxter together with associated quad-graph systems was presented.

A definition for the complete integrability in the Liouville sense of Yang--Baxter maps over Grassmann algebras is not yet available. However, starting from a Grassmann Yang--Baxter map one can obtain a hierarchy of commutative maps by fixing the number of generators of the underlying Grassmann algebra. In this paper we study the Liouville integrability of such a commutative map associated to a parametric Grassmann Adler map \cite{KRM} by restricting to the Grassmann algebra with one canonical generator. 

This paper is organised as follows: In Section \ref{YB-Refactorisations} we give all the necessary definitions and notations that make this text self-contained. In particular, we give the definition of Grassmann algebras, their grading structure, and provide the definitions of superdeterminant and supertrace for matrices with Grassmann elements. Also, we give the parametric set-theoretic Yang--Baxter equation and define its solutions over sets with Grassmann variables. We call such solutions parametric Grassmann (or supersymmetric) Yang--Baxter maps. We further explain the relation of such maps to matrix refactorisation problems. In Section \ref{Adler_map} we provide a Grassmann-extended Adler map and invariant and anti-invariant quantities. In Section \ref{Adler-Int} we give the definition of the Liouville integrability for commutative maps, we present a six-dimensional commutative extension of the Adler map, and we prove its Liouville integrability. Finally, we discuss how the iterations of this map, seen as a discrete dynamical system, can be solved exactly. In Section \ref{concl} we close we some concluding remarks and ideas for future work.

\section{Yang-Baxter equation and matrix refactorisation problems on Grassmann algebras}\label{YB-Refactorisations}

A Grassmann algebra of order $n$  denoted by $\Gamma(n)$, over a field $\mathbb{F}$ of characteristic zero,  is an associative algebra with unit $\mathbb{1}$ and $n$ generators $\theta_i$, $i=1, \ldots, n$, satisfying  
\begin{equation}
    \theta_i \theta_j + \theta_j \theta_i = 0\,.
    \label{eq:defrelat}
\end{equation}
From \eqref{eq:defrelat}, it follows that $\theta_i^2=0$ for all $i$ and that a generic element of the Grassmann algebra $\Gamma(n)$ is of the form
\begin{equation}
    f=\sum_{s\geq 0}\sum_{i_1<\cdots<i_s}f_{i_1\ldots i_s}\theta_{i_1}\cdots\theta_{i_s}.
    \label{eq:graselem}
\end{equation}
We say that those elements $\eqref{eq:graselem}$ of $\Gamma(n)$ that contain sums of products of only even number of $\theta_i$'s are \textit{even}, while those that contain sums of products of only odd number of $\theta_i$'s are called \textit{odd}. We denote the subset of all even and odd elements by $\Gamma(n)_0$ and $\Gamma(n)_1$, respectively. It follows that $\Gamma(n)$ admits a $\mathbb{Z}_2$-gradation, i.e. it can be written as  $\Gamma(n)=\Gamma(n)_0\oplus\Gamma(n)_1$ with $\Gamma(n)_i\Gamma(n)_j\subseteq\Gamma(n)_{(i+j)\mod2}$. From the $\mathbb{Z}_2-$grading it follows that $\Gamma(n)_0$ is a subalgebra of $\Gamma(n)$ and that the even elements commute with all elements of $\Gamma(n)$, while the odd elements anticommute with each other.  Hence $\Gamma(n)$ has the structure of a \textit{superalgebra} with $\Gamma(n)_0$ being the set of \textit{bosonic} elements while $\Gamma(n)_1$ the set of \textit{fermionic} elements. In what follows, unless stated otherwise, elements of $\Gamma(n)_{0}$ will be denoted by Latin letters, while elements of $\Gamma(n)_{1}$ by Greek letters, with the exception of $\lambda$ which plays the role of a spectral parameter\footnote{Objects used in this paper arise in the study of spectral problems in soliton theory.}.

The notions of the determinant and the trace of a matrix with elements in $\Gamma(n)$ are defined for square matrices, $M$, of the block-form $M=\left(
\begin{matrix}
 P & \Pi \\
 \Lambda & L
\end{matrix}\right)$.
The elements of matrices $P$ and $L$ belong in $\Gamma(n)_0$, while those of $\Pi$ and $\Lambda$ in $\Gamma(n)_1$. Matrices  $\Pi$ and $\Lambda$ are not necessarily square.  The \textit{superdeterminant} of $M$, denoted by $\sdet(M)$, is defined by:
\begin{equation*}
\sdet(M)=\det(P-\Pi L^{-1}\Lambda)\det(L^{-1})=\det(P^{-1})\det(L-\Lambda P^{-1}\Pi),
\end{equation*}
where $\det(\cdot)$ is the usual determinant of a matrix, while the \textit{supertrace}, 
\begin{equation*}
\str(M)=\tr (P)-\tr (L),
\end{equation*}
where  $\tr(\cdot)$ is the usual trace of a matrix. For more details on Grassmann algebras, see \cite{Berezin, LieSuper}.

We are interested in solutions $S_{a, b}$ to the parametric Yang--Baxter equation
\begin{equation}\label{SYB_eq1}
S^{12}_{a,b}\circ S^{13}_{a,c} \circ S^{23}_{b,c}=S^{23}_{b,c}\circ S^{13}_{a,c} \circ S^{12}_{a,b}\,,
\end{equation}
where $a, b, c \in \mathbb{F}$ and $S_{a,b}: V_n^{k, l} \times V_n^{k, l}  \rightarrow V_n^{k, l} \times V_n^{k, l}$, with 
\begin{equation}\label{Vn}
    V_n^{k, l}:= \lbrace (\bm x,\bm \chi) \, | \, \bm x \in \Gamma(n)^{k}_0,~ \bm \chi \in  \Gamma(n)^{l}_1  \rbrace \,,
\end{equation}
and the indices $k, \,l$ are the number of even and odd Grassmann variables, respectively. The maps $S_{a,b}^{ij}: V_n^{k, l} \times V_n^{k, l} \times V_n^{k, l} \rightarrow V_n^{k, l} \times V_n^{k, l} \times V_n^{k, l}$, $i,j=1,2,3$, $i\neq j$, are defined by the following relations
\begin{equation*}
S_{a,b}^{12}=S_{a, b}\times id\,, \quad S_{b,c}^{23}=id\times S_{b, c} \,,  \quad S_{a,c}^{13}=\pi^{12} S_{a,c }^{23} \pi^{12}\,,
\end{equation*}
where $\pi^{12}$ is the permutation defined by 
$$
\pi^{12}((\bm x, \bm \chi), (\bm y, \bm \psi), (\bm z, \bm \zeta)) = ((\bm y, \bm \psi), (\bm x, \bm \chi), (\bm z, \bm \zeta))\,.
$$
We call such solutions $S_{a, b}$ to the Yang--Baxter equation \eqref{SYB_eq1} {\it parametric Grassmann Yang--Baxter maps} or {\it parametric supersymmetric Yang--Baxter maps}.

Following \cite{SV}, we define a \textit{Lax matrix} of the parametric Yang--Baxter map 
$$
S_{a, b}((\bm x, \bm \chi),(\bm y, \bm \psi))=((\bm u, \bm \xi),(\bm v, \bm \eta))
$$ 
to be a matrix $\mathcal{L}_a((\bm x,\bm \chi);\lambda)$, depending on the point $(\bm x,\bm \chi) \in V_n^{k, l}$, a parameter $a\in\mathbb{F}$ and a spectral parameter $\lambda$, such that 
\begin{equation}\label{eqLax}
\mathcal{L}_a(\bm u,\bm \xi) \mathcal{L}_b(\bm v,\bm \eta)=\mathcal{L}_b(\bm y,\bm \psi) \mathcal{L}_a(\bm x,\bm \chi)\,,
\end{equation}
where we have suppressed the dependence on $\lambda$ for convenience.

The quantity $\str \left(\mathcal{L}_b(\bm y,\bm \psi)\mathcal{L}_a(\bm x,\bm \chi) \right)$ constitutes a generating function of invariants for the map $S_{a,b}$. By an invariant we mean a function $I$ of both even and odd variables such that
\begin{equation}
I(\bm u,\bm \xi, \bm v, \bm \eta)=I(\bm x,\bm \chi, \bm y, \bm \psi).
\end{equation}
It is also possible that the map admits anti-invariants, that is functions $J$ such that
\begin{equation}
J(\bm u,\bm \xi, \bm v, \bm \eta)=-J(\bm x,\bm \chi, \bm y, \bm \psi).
\end{equation}
The product of two different anti-invariants or the square of an anti-invariant are invariants.

\section{A Grassmann-extended Adler map} \label{Adler_map}

A non-commutative extension on Grassmann algebras of the Adler map  \cite{Adler}
\begin{equation}\label{Adler map}
R_{a, b}: (x,y)   \mapsto \left( y + \frac{a-b}{x+y}, x+ \frac{b-a}{x+y} \right)
\end{equation}
first appeared in \cite{KRM}, where the authors derived the map $S_{a,b}: V_n^{1,2} \times V_n^{1,2} \rightarrow V_n^{1,2} \times V_n^{1,2} $
\begin{equation} \label{map}
S_{a, b}: \left( (x, \chi_1, \chi_2), (y, \psi_1, \psi_2) \right) \mapsto \left((u, \xi_1, \xi_2), (v, \eta_1, \eta_2) \right)
\end{equation}
with components given by 
\begin{subequations} \label{map comp}
\begin{align}
x &\mapsto u = y + \frac{a-b}{x+y-\chi_1 \psi_2} \,, \\
\chi_1 &\mapsto \xi_1 = \psi_1 - \frac{a-b}{x+y} \chi_1\,,  \\
\chi_2 &\mapsto \xi_2 =  \psi_2 \,, \\
y &\mapsto v = x - \frac{a-b}{x+y - \chi_1 \psi_2}\,,  \\
\psi_1 &\mapsto \eta_1 = \chi_1 \,,\\
\psi_2 &\mapsto \eta_2 = \chi_2 + \frac{a-b} {x+y}\psi_2 \,.
\end{align}
\end{subequations}
The map $S_{a, b}$ given in \eqref{map} arises from the re-factorisation problem 
\begin{equation} \label{refact M}
\mathcal{L}_a(u, \bm \xi) \mathcal{L}_b(v, \bm \eta) = \mathcal{L}_b(y, \bm \psi) \mathcal{L}_a(x, \bm \chi)
\end{equation}
of the following Lax matrix 
\begin{equation} \label{Lax}
\mathcal{L}_a(x, \bm \chi) = 
\begin{pmatrix}
x & 1 & 0 \\
 x^2 + \chi_1 \chi_2 -a + \lambda & x& \chi_1\\
\chi_2 & 0 & 1
\end{pmatrix},
\end{equation}
with $(x, \bm \chi) = (x, \chi_1, \chi_2) \in V_n^{1,2}$. Matrix $\mathcal{L}_a(x, \bm \chi)$ was first derived in \cite{XL} in relation to the Darboux transformation for a generalised super KdV system. We note that $\mathcal{L}_a(x, \bm \chi)$ has constant superdeterminant. Here we find certain invariants and anti-invariants of the map \eqref{map}-\eqref{map comp} which was shown to be a Yang--Baxter map in \cite{KRM}.

\begin{proposition}\label{YB n}
The map $S_{a, b}$ given in \eqref{map}-\eqref{map comp}  admits the following invariants 
\begin{align}
I_{1} &= x+ y \,, \label{I1}\\
I_{2} &= \chi_1 \chi_2 + \psi_1 \psi_2 \label{I2} \,, \\
I_{3} & = b \chi_1 \chi_2 + a \psi_1 \psi_2 + (x+y) (\chi_1 \psi_2 + \psi_1 \chi_2) \label{I4} \,,\\
I_4 &= \chi_1\psi_1 \chi_2 \psi_2 \,.
\end{align}
\end{proposition}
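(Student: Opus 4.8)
The plan is to verify each of the four claimed invariants by direct substitution of the component formulas \eqref{map comp}, exploiting the two structural features of the Grassmann algebra: every odd generator is nilpotent, $\chi_i^2=\psi_i^2=0$, and distinct odd elements anticommute, whereas the even coefficient $c:=\frac{a-b}{x+y}$ is bosonic and hence commutes with everything. The invariant $I_1$ is immediate: adding the two even components gives $u+v=\bigl(y+\frac{a-b}{x+y-\chi_1\psi_2}\bigr)+\bigl(x-\frac{a-b}{x+y-\chi_1\psi_2}\bigr)=x+y$, since the two equal fractional terms cancel with no Grassmann bookkeeping. This identity $u+v=x+y$ is then reused throughout.

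For $I_2$ and $I_4$ the computation is short. Substituting $\xi_1=\psi_1-c\chi_1$, $\xi_2=\psi_2$, $\eta_1=\chi_1$, $\eta_2=\chi_2+c\psi_2$ into $\xi_1\xi_2+\eta_1\eta_2$, the two cross-terms $-c\chi_1\psi_2$ and $+c\chi_1\psi_2$ cancel, leaving $\chi_1\chi_2+\psi_1\psi_2=I_2$. For the top-degree invariant $I_4$ I would first note that $\xi_1\eta_1=(\psi_1-c\chi_1)\chi_1=\psi_1\chi_1$ and $\xi_2\eta_2=\psi_2(\chi_2+c\psi_2)=\psi_2\chi_2$, because the terms proportional to $\chi_1^2$ and $\psi_2^2$ vanish; then $\xi_1\eta_1\xi_2\eta_2=\psi_1\chi_1\psi_2\chi_2$, and two anticommutations (a net sign of $+1$) return $\chi_1\psi_1\chi_2\psi_2=I_4$.

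The substantive case is $I_3$. Using $u+v=x+y$, the prefactor of the final summand is unchanged, so it suffices to expand $b\,\xi_1\xi_2+a\,\eta_1\eta_2+(x+y)(\xi_1\eta_2+\eta_1\xi_2)$ and to collect terms according to the four monomial types $\chi_1\chi_2$, $\psi_1\psi_2$, $\chi_1\psi_2$, $\psi_1\chi_2$. The key relations are $(x+y)c=a-b$ and $(x+y)c^2=(a-b)c$, which convert the $c$- and $c^2$-weighted contributions back into polynomial coefficients. After collecting, the $\chi_1\chi_2$ coefficient is $a-(a-b)=b$, the $\psi_1\psi_2$ coefficient is $b+(a-b)=a$, and the $\psi_1\chi_2$ coefficient is simply $x+y$, matching $I_3$ termwise.

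The only genuine obstacle is the $\chi_1\psi_2$ coefficient, which receives four separate contributions: $-bc$ from $b\,\xi_1\xi_2$, $+ac$ from $a\,\eta_1\eta_2$, the term $-(a-b)c$ arising from the $-c^2\chi_1\psi_2$ inside $(x+y)\xi_1\eta_2$, and the clean $+(x+y)$ from $(x+y)\eta_1\xi_2$. The plan is to check that the three $c$-weighted contributions combine as $[-b+a-(a-b)]c=0$, so that only $(x+y)\chi_1\psi_2$ survives and reproduces exactly the corresponding term of $I_3$. This cancellation is the single place where a dropped sign or an arithmetic slip would break the argument, so it is where I would concentrate the care; everything else is routine bookkeeping with the nilpotency and anticommutation rules.
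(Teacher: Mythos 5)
Your verification is correct in every detail, and for the substantive case $I_3$ it is exactly the check the paper has in mind when it says the invariance ``can be verified by straightforward calculations'': your cancellation $[-b+a-(a-b)]c=0$ of the three $c$-weighted contributions to the $\chi_1\psi_2$ coefficient is the computation the paper leaves to the reader, and you carry it out with the correct Grassmann bookkeeping (the denominators $x+y-\chi_1\psi_2$ versus $x+y$ are handled consistently, and $u+v=x+y$ is established before being reused). The two routes differ in lighter respects. First, for $I_1$ and $I_2$ the paper does not compute at all: it reads these off the generating function $\str\left(\mathcal{L}_b(y,\bm\psi)\mathcal{L}_a(x,\bm\chi)\right)$ attached to the Lax representation \eqref{Lax}, citing \cite{KRM}; your direct substitution is more elementary and self-contained, while the supertrace argument explains where these invariants come from and works uniformly for any map with a Lax representation. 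Second, for $I_4$ you and the paper perform literally the same multiplication, but the paper packages it conceptually: it first records that $J_1=\chi_1\psi_1$ and $J_2=\chi_2\psi_2$ are \emph{anti}-invariants --- precisely your identities $\xi_1\eta_1=\psi_1\chi_1=-\chi_1\psi_1$ and $\xi_2\eta_2=\psi_2\chi_2=-\chi_2\psi_2$ --- and then invokes the general fact that a product of two anti-invariants is an invariant. That framing is not idle: the paper returns to it in the remark on the $\Gamma(1)$ reduction, where $J_1$ and $J_2$ cease to be anti-invariants and $I_4$ consequently ceases to be an invariant, so stating the anti-invariance separately (rather than leaving it implicit inside the product, as you do) is the one piece of content your write-up omits.
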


\begin{proof}
Invariants $I_i$, $i=1,2$, follow from $\str(\mathcal{L}_b(y, \psi_1, \psi_2) \mathcal{L}_a(x, \chi_1, \chi_2))$ and were already found in \cite{KRM}. The invariance of $I_3$ can be verified by straightforward calculations. We notice that the quantities 
$$
J_1= \chi_1 \psi_1 \quad \mbox{and} \quad J_2 = \chi_2 \psi_2
$$
are anti-invariants. Indeed, 
$$
J_1 \circ S_{a,b}=\xi_1\eta_1=\left(\psi_1 - \frac{a-b}{x+y} \chi_1\right)\chi_1=\psi_1\chi_1=-J_1\,,
$$
and similarly for $J_2$. The anti-invariants $J_1$, $J_2$ lead to the invariant $I_{4} = J_1J_2 =\chi_1\psi_1 \chi_2 \psi_2$.
\end{proof}

\begin{remark}\normalfont
In the bosonic limit, where all odd variables are zero and all even variables are in $\mathbb{F}$, map \eqref{map comp} becomes the Adler map \eqref{Adler map}.
\end{remark}

\section{Adler map over $\Gamma(1)$}\label{Adler-Int}

In this section we restrict the parametric Grassmann Yang--Baxter map $S_{a, b}$ given in \eqref{map}-\eqref{map comp} in the case where all variables are in $\Gamma(1)$. Hence, we investigate the dynamical and integrability properties of the map $S_{a, b}: V_1^{1, 2} \times V_1^{1, 2} \rightarrow V_1^{1, 2} \times V_1^{1, 2}$. We first provide the definition of Liouville integrability for maps in the commutative setting. 

\newtheorem{CompleteIntegrability}{Definition}[subsection]
\begin{CompleteIntegrability}
Let  
$Y:(x_1,...,x_{2N+M})\mapsto (u_1,...,u_{2N+M})$, $u_i=u_i(x_1,...,x_{2N+M})$, $i=1,...,2N+M$, 
be a $(2N+M)$-dimensional map. If
\begin{itemize}
	\item[\textbf{i.}] there is a Poisson bracket $\lbrace \cdot,\cdot\rbrace$ such that $\left\{x_i,x_j\right\}$ has rank $2N$ and is invariant under the map $Y$;
	\item[\textbf{ii.}] map $Y$ admits $N$ functionally independent invariants, $I_i$, such that $\left\{I_i,I_j\right\}=0$, $i,j=1,\ldots,N$;
	\item[\textbf{iii.}] there are $M$ Casimir functions, $C_i$, $i=1,\ldots,M$, i.e. $\left\{C_i,f\right\}=0$, for any arbitrary function $f=f(x_1,...,x_{2N+M})$, which are invariants of the map;
\end{itemize}
then $Y$ is said to be completely integrable or Liouville integrable \cite{Fordy, Maeda, Veselov4}.
\end{CompleteIntegrability} 

In the case of the Grassmann algebra $\Gamma(1)$ with unit $\mathbb{1}$ and generator $\theta$, such that $\theta ^2 = 0$, even elements can be expressed as $p \,\mathbb{1}$ with $p \in \mathbb{F}$, while for odd elements  we have $\rho \,\theta$, $\rho \in \mathbb{F}$. 

\begin{proposition} \label{G1 prop}
The map \eqref{map}--\eqref{map comp} over $\Gamma(1)$ becomes a commutative map $\mathbb{F}^6 \rightarrow \mathbb{F}^6$
\begin{equation}\label{YB-g1}
S_{a,b} (x, \chi_{1}, \chi_{2}, y, \psi_{1}, \psi_{2}) =(u, \xi_{1}, \xi_{2}, v, \eta_{1}, \eta_{2}),
\end{equation}
where
\begin{subequations} \label{G1 map}
\begin{align}
x \mapsto u &= y + \frac{a-b}{x+y} \,, \\
\chi_{1} \mapsto \xi_{1} &= \psi_{1} - \frac{a-b}{x+y} \chi_{1}\,,  \\
\chi_{2} \mapsto \xi_{2} &=  \psi_{2} \,, \\
y\mapsto v &= x - \frac{a-b}{x+y}\,,  \\
\psi_{1} \mapsto \eta_{1} &= \chi_{1} \,,\\
\psi_{2} \mapsto \eta_{2} &= \chi_{2} + \frac{a-b} {x+y}\psi_{2} \,.
\end{align}
\end{subequations}
Map \eqref{YB-g1}-\eqref{G1 map} is a Yang--Baxter map and admits the following invariants:
\begin{subequations}
\begin{align}
I_{1} &= x + y\,,\\
I_{2} &= \chi_{1} \chi_{2} + \psi_{1}\psi_{2},\\
I_{3} & = b \chi_{1} \chi_{2} + a \psi_{1} \psi_{2} + (x+y) (\chi_{1} \psi_{2} + \psi_{1} \chi_{2}). 
\end{align}
\end{subequations}
\end{proposition}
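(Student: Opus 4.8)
The plan is to establish three separate claims: (i) that the reduction of \eqref{map}--\eqref{map comp} to $\Gamma(1)$ is exactly the scalar map \eqref{G1 map}; (ii) that \eqref{G1 map} solves the Yang--Baxter equation; and (iii) that $I_1,I_2,I_3$ are invariants of it.

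First I would perform the reduction. Over $\Gamma(1)$ every even element is a scalar multiple of $\mathbb{1}$ and every odd element a scalar multiple of the single generator $\theta$; since $\theta^2=0$, the product of any two odd elements vanishes, and in particular $\chi_1\psi_2=0$. Hence the denominators $x+y-\chi_1\psi_2$ in the $u$ and $v$ components of \eqref{map comp} collapse to $x+y$, while the remaining components already carry the denominator $x+y$. Identifying each even element with its $\mathbb{1}$-coefficient and each odd element with its $\theta$-coefficient gives a bijection $V_1^{1,2}\cong\mathbb{F}^3$, under which the Grassmann map becomes precisely the commutative map \eqref{G1 map} on $\mathbb{F}^6$, settling (i).

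For (ii) I would argue by inheritance rather than recomputation: the map \eqref{map}--\eqref{map comp} is a Yang--Baxter map over every $\Gamma(n)$ (Proposition \ref{YB n} and \cite{KRM}), hence in particular over $\Gamma(1)$, and the identity \eqref{SYB_eq1} between maps on three copies of $V_1^{1,2}$ transports under the bijection $V_1^{1,2}\cong\mathbb{F}^3$ to the ordinary Yang--Baxter equation for \eqref{G1 map}. For (iii) I would set $\mu:=(a-b)/(x+y)$ and check $I_k\circ S_{a,b}=I_k$ directly, now with the odd variables treated as ordinary commuting scalars. The relation $I_1\circ S_{a,b}=u+v=x+y$ is immediate; in $I_2$ the cross terms $\mp\mu\chi_1\psi_2$ produced by $\xi_1\xi_2$ and $\eta_1\eta_2$ cancel, leaving $\chi_1\chi_2+\psi_1\psi_2$; and in $I_3$ the coefficient of $\chi_1\chi_2$ is restored to $b$ and that of $\psi_1\psi_2$ to $a$, the coefficient of $\psi_1\chi_2$ stays $x+y$, and the spurious contributions to the $\chi_1\psi_2$ coefficient cancel by virtue of $(x+y)\mu=a-b$, again leaving $x+y$.

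The conceptual heart of the proof, and the step demanding the most care, is the passage from the Grassmann invariants of Proposition \ref{YB n} to genuine invariants of the scalar map. As elements of $\Gamma(1)$ the quantities $I_2$ and $I_3$ vanish identically, being sums of products of two odd elements, so they carry content only once the odd variables are reinterpreted commutatively; and this reinterpretation need not preserve invariance, since the anticommutativity underpinning Proposition \ref{YB n} is discarded. Each candidate must therefore be re-verified, as above. The same phenomenon explains the absence of $I_4=\chi_1\psi_1\chi_2\psi_2$ from the list: its invariance rested on $J_1=\chi_1\psi_1$ and $J_2=\chi_2\psi_2$ being anti-invariants, which in turn used $\chi_i^2=0$ and $\psi_j\chi_i=-\chi_i\psi_j$, relations that fail in the commutative setting. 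The only genuinely computational part is then the slightly lengthy but routine verification of the $I_3$ identity.
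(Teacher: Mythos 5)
Your proof is correct, and its overall shape (reduction by coefficient identification, Yang--Baxter property, direct verification of the invariants) matches the paper's proof. The one genuine difference is the Yang--Baxter step: the paper simply verifies the classical (commutative) Yang--Baxter equation for \eqref{G1 map} by straightforward substitution, whereas you inherit it from the Grassmann statement --- the map \eqref{map}--\eqref{map comp} is a Yang--Baxter map over every $\Gamma(n)$, in particular over $\Gamma(1)$, and the coefficient identification $V_1^{1,2}\cong\mathbb{F}^3$ is a bijection intertwining the restricted Grassmann map with the scalar map, so conjugating $S^{12}_{a,b}$, $S^{13}_{a,c}$, $S^{23}_{b,c}$ by $\phi\times\phi\times\phi$ transports the identity \eqref{SYB_eq1} to the ordinary Yang--Baxter equation. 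That argument is sound and buys you freedom from any recomputation; its only cost is reliance on the fact that the result of \cite{KRM} applies at $n=1$, which the paper's self-contained verification does not need. (A minor citation slip: the Yang--Baxter property of \eqref{map}--\eqref{map comp} is asserted in the text of Section \ref{Adler_map} with reference to \cite{KRM}; Proposition \ref{YB n} itself concerns only invariants and anti-invariants.) Your handling of the invariants is the same direct check as the paper's --- and your computations for $I_2$ and $I_3$ are accurate --- but your explanation of \emph{why} a fresh check is unavoidable is a valuable point the paper leaves implicit: as elements of $\Gamma(1)$ the expressions $I_2$, $I_3$ vanish identically, and the Grassmann proof of invariance uses anticommutativity, so nothing transfers automatically once the odd variables are reinterpreted as commuting scalars. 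This also correctly accounts for the disappearance of $I_4$ from the list, in agreement with the paper's remark that $J_1$, $J_2$ fail to be anti-invariants of \eqref{G1 map}.
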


\begin{proof}
We obtain the map in \eqref{G1 map} by restricting all the variables in the map \eqref{map comp} in the algebra $\Gamma(1)$. Then we obtain the expressions \eqref{G1 map} by equating the coefficients of $\mathbb{1}$ and $\theta$. The Yang--Baxter property can be readily verified by straightforward substitution to the classical (commutative) Yang--Baxter equation.  Finally, the invariance of $I_1, I_2, I_3$ under the map \eqref{G1 map} can be directly verified.
\end{proof}

\begin{remark}
It can also be verified that $J_1=\chi_1\psi_1$ and $J_2=\chi_2\psi_2$ are not anti-invariants of the map \eqref{G1 map}. It also follows that $I_4=J_1J_2$ given in Proposition \eqref{G1 prop} is not an invariant.
\end{remark}

\begin{theorem}
Map \eqref{YB-g1}-\eqref{G1 map} is completely integrable in the Liouville sense.
\end{theorem}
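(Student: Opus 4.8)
The plan is to verify the three conditions of the Definition directly; the only genuine work is discovering a compatible Poisson bracket, after which the invariants are essentially in hand. Since the phase space is six–dimensional I must fix $N$ and $M$ with $2N+M=6$, and I would determine these by computing the rank of the invariant bracket rather than guessing at the outset. A useful preliminary observation is that the even part $(x,y)\mapsto(u,v)$ of \eqref{G1 map} is an \emph{involution}: one checks $u+v=x+y$ and that applying the even map twice returns $(x,y)$, so besides $I_1=x+y$ it preserves a second quantity built from the two–cycle of $x-y$, for instance $C_2=\big(x-y-\tfrac{a-b}{x+y}\big)^2$. Thus the bosonic sector already carries two functionally independent invariants $I_1,C_2$, and these are the natural candidates for Casimirs.

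The main difficulty is condition \textbf{i}. The Jacobian of \eqref{G1 map} is block lower–triangular in the grouping $(x,y\mid\chi_1,\psi_1\mid\chi_2,\psi_2)$, and each $2\times2$ diagonal block has determinant $-1$: the even block because the Adler map is an involution, and the two fermionic blocks $A_1=\big(\begin{smallmatrix}-c&1\\1&0\end{smallmatrix}\big)$, $A_2=\big(\begin{smallmatrix}0&1\\1&c\end{smallmatrix}\big)$ with $c=\tfrac{a-b}{x+y}$ by direct computation. Hence the full map is orientation–reversing, and every ``diagonal'' symplectic pairing (such as $\{x,y\}$, $\{\chi_1,\psi_1\}$ or $\{\chi_2,\psi_2\}$ with constant coefficient) is \emph{anti}–invariant and so not preserved. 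The way around this is to pair the two fermionic sectors against each other: the spectra of $A_1$ and $A_2$ are related by an overall sign ($\mu$ solves $\mu^2+c\mu-1=0$ iff $-\mu$ solves $\nu^2-c\nu-1=0$), so one can choose a bracket supported on the cross terms $\{\chi_1,\chi_2\},\{\chi_1,\psi_2\},\{\psi_1,\chi_2\},\{\psi_1,\psi_2\}$ (with coefficients allowed to depend on the invariant $x+y$) for which the sign defects of the two blocks cancel. I would locate it by passing to the eigencoordinates of $A_1,A_2$, writing down the unique (up to scale) rank–four bivector whose generators pair eigendirections with reciprocal multipliers, and transforming back; the result is a rank–four Poisson structure on the fermions with $x,y$ in its kernel.

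With such a bracket $N=2$ and $M=2$. Condition \textbf{iii} and the invariance part of \textbf{i} then follow: since $x,y$ lie in the kernel, the even invariants $C_1=I_1=x+y$ and $C_2$ are Casimirs, and they are invariants of the map by the involution argument above, while invariance of the bivector under \eqref{G1 map} is precisely the eigenvalue–reciprocity that dictated its construction, to be confirmed by a direct substitution. For condition \textbf{ii} I would take the involutive pair to be $I_2$ and $I_3$: because the bracket couples the two fermionic blocks whereas $I_2,I_3$ are combinations of the two invariant ``reciprocal'' products of those blocks, $\{I_2,I_3\}=0$ should reduce to the fact that these products live in complementary symplectic planes, a short direct check. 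Functional independence of $C_1,C_2,I_2,I_3$ is clear, since $C_1,C_2$ involve only $x,y$ (and are independent there) while $I_2,I_3$ are independent quadratics in the fermions, so the relevant Jacobian has $\rank 4$.

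The step I expect to be the real obstacle is producing this invariant bracket in closed form in the original variables and checking the Jacobi identity for it: the orientation–reversing character of the map forbids the obvious symplectic structures and forces one into a cross–sector, $(x+y)$–dependent bracket whose coefficients are only rational, not polynomial, in the invariants. Everything else—the second Casimir from the even involution, the involutivity of $I_2,I_3$, and functional independence—I expect to be routine once the bracket is fixed. (Should the bracket one finds instead be nondegenerate of rank $6$, the same scheme applies with $N=3$, $M=0$ and $I_1,I_2,I_3$ themselves as the involutive family; deciding between the two cases is simply a matter of computing $\rank$ of the bivector.)
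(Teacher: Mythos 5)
Your proposal has the same skeleton as the paper's proof --- a rank-four Poisson structure supported on the fermionic variables with $x,y$ in its kernel (so $N=2$, $M=2$), the pair $I_2,I_3$ as the involutive invariants, and two bosonic invariants as Casimirs --- and every step you defer does in fact go through; the genuinely different part is how you hunt for the bracket, and there your expectations are more pessimistic than reality. The paper simply exhibits the constant bracket $\{\chi_1,\chi_2\}=\{\psi_1,\psi_2\}=1$, all other fundamental brackets zero, and then verifies rank, invariance, involution and the Casimir property directly (its second Casimir is $C_2=ax+by+xy(x+y)$, which is functionally equivalent, modulo $C_1$, to your $C_2=\bigl(x-y-\tfrac{a-b}{x+y}\bigr)^2$). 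Your spectral analysis, pushed through, lands on exactly this answer: with your matrices $A_1,A_2$ one has $A_2=A_1^{-1}$ and $A_2^T=A_2$, so invariance of a cross-block bivector $P_{12}$ (rows indexed by $\chi_1,\psi_1$, columns by $\chi_2,\psi_2$) reads $A_1P_{12}A_2^T=P_{12}$, i.e.\ $[P_{12},A_1]=0$, whose solutions are $P_{12}=\alpha I+\beta A_1$; the choice $\alpha=1,\beta=0$ (equal weights on your two reciprocal-eigendirection pairings, since the spectral projectors of $A_1$ sum to $I$) is precisely the paper's bracket. Hence the step you single out as the real obstacle dissolves: no rational, $(x+y)$-dependent coefficients are needed, and the Jacobi identity is automatic for any bivector involving only fermionic directions whose coefficients depend only on the kernel variables $x,y$, because every term of the Jacobiator differentiates a coefficient in a fermionic direction. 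Your other deferred checks also succeed: $\{I_2,I_3\}=0$ holds for the whole family $\alpha I+\beta A_1$ (the bracket is linear in $P_{12}$, and it vanishes both for $P_{12}=I$ and for $P_{12}=A_1$, the latter computation using $(x+y)\cdot\tfrac{a-b}{x+y}=a-b$), and your $C_2$ is indeed an invariant since $x-y-\tfrac{a-b}{x+y}$ is an anti-invariant of the even part. What your route buys over the paper's is an explanation rather than a verification: the determinant $-1$ of each $2\times2$ block genuinely forbids any invariant bracket with nonzero $\{x,y\}$, $\{\chi_1,\psi_1\}$ or $\{\chi_2,\psi_2\}$ entries, so the paper's cross-sector bracket is essentially the only possible type. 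What it costs is that, as written, your argument is a programme whose central object is never constructed; to count as a proof you would still have to write $P_{12}$ down and carry out the three verifications you postpone.
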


\begin{proof}
The gradients $\nabla I_i$, $i=2,3$, are linearly independent, thus the invariants $I_i$, $i=2,3$ are functionally independent. Moreover, the latter are in involution $\{I_i,I_j\}=0$ with respect to the following Poisson bracket:
\begin{equation}
 \{\chi_{1},\chi_{2}\}=\{\psi_{1},\psi_{2}\}=1 
\end{equation}
and all the rest fundamental brackets vanish. The rank of the associated $6\times 6$ Poisson matrix $P$ is four. Additionally, for the following quantities
$$
C_1=I_1=x+y,\quad C_2=ax+by+xy(x+y) 
$$
we have that $C_i\circ S_{a,b}=C_i$, $i=1,2$, i.e. they are invariants themselves, and also $(\nabla C_i)^t P=0$, from which follows that the latter are Casimir functions. Moreover, $\{C_i,I_j\}=0$ for all possible combinations. The Casimir $C_2$ can be constructed using the involutivity of the Adler map \eqref{Adler map} and averaging of the function $(a+b)y^2+2(b-a)xy-(a+b)x^2$ over an orbit of the map. Finally, map $S_{a,b}$ preserves the Poisson bracket, which completes the proof.
\end{proof}

\begin{remark}\normalfont
We observe that the commutative map \eqref{G1 map} can be decomposed into Adler's map \eqref{Adler map} and a linear map with matrix coefficients depending on the variables that appear on Adler's map. In particular the matrix coefficients are functions of the Casimir $C_1$ and hence constant for a given orbit. This implies the solvability of the map \eqref{G1 map} if seen as discrete dynamical system.
\end{remark}

\section{Conclusions}\label{concl}
In this paper we presented a commutative six-dimensional extension of Adler's map, and studied its integrability and dynamical properties. The map was derived by restricting the parametric Grassmann Yang--Baxter map given in \eqref{map}-\eqref{map comp} in the case of the $\Gamma(1)$ Grassmann algebra. We conjecture that for each Grassmann algebra $\Gamma(n)$ the commutative map obtained, in a similar way as in the case of $\Gamma(1)$, will be integrable,  leading in such a way to a hierarchy of ($3 \times 2^n$)-dimensional commutative integrable Yang--Baxter maps for $n=1,2, \ldots$. It is interesting to define complete integrability in the Grassmann setting in such a way that the integrability of the hierarchy of these commutative maps would directly follow. We aim to extend the ideas presented in this paper to the case of entwining Yang--Baxter maps.

\section*{Acknowledgements}
This paper is submitted for the proceedings of the second international conference on Integrable Systems and Nonlinear Dynamics in Yaroslavl 19--23 October 2020. 

\noindent This work started while S.K.-R. visited Heriot-Watt University in January 2019, and the University of Essex as an International Visiting Fellowship in November 2019. S.K.-R.'s work was funded by the Russian Science Foundation (project number 20-71-10110).

\end{document}